\documentclass[reprint, aps, pra, amsfonts, amssymb, amsmath,  showkeys, superscriptaddress, twocolumn, longbibliography]{revtex4-2}

\usepackage[T1]{fontenc}
\usepackage[british]{babel}

\usepackage{silence}
\usepackage{float}
\usepackage{booktabs}
\usepackage{extarrows}
\usepackage{tikz-cd}
\usepackage{amsmath,amssymb,amsfonts,amscd,amsthm}
\usepackage[shortlabels]{enumitem}
\usepackage{algorithm}
\usepackage{algpseudocode}

\usepackage[colorlinks=true,citecolor=purple,linkcolor=purple,urlcolor=purple, bookmarksdepth=section]{hyperref}

\usepackage{url}

\usepackage{upgreek}

\newtheorem{theorem}{Theorem}
\newtheorem{lemma}{Lemma}

\theoremstyle{plain}
\newtheorem{definition}{Definition}

\usepackage{orcidlink}

\newcommand{\expect}[1]{\left\langle#1\right\rangle}
\newcommand{\ket}[1]{\vert {#1}\rangle}
\newcommand{\bra}[1]{\langle {#1}\vert}
\newcommand{\tr}[1]{\text{tr}\left\{#1\right\}}

\newcommand{\um}{\mathfrak{u}(m)}
\newcommand{\imum}{d\varphi(\mathfrak{u}(m))}

\newcommand{\uD}{\mathfrak{u}(D)}
\newcommand{\Um}{\mathrm{U}(m)}
\newcommand{\UD}{\mathrm{U}(D)}

\newcommand{\dphi}{d\varphi}
\newcommand{\dphiinv}{d\varphi^{-1}}
\newcommand{\hnm}{\mathcal{F}^m_n}
\newcommand{\nablag}{\nabla_{\! S\,} g}

\newcommand{\inket}{\ket{\rm in}}
\newcommand{\inbra}{\bra{\rm in}}
\newcommand{\tket}{\ket{t}}
\newcommand{\tbra}{\bra{t}}
\newcommand{\hket}{\ket{h}}

\newcommand{\thket}{\ket{t,h}}
\newcommand{\thbra}{\bra{t,h}}
\newcommand{\outket}{\ket{\rm out}}
\newcommand{\outbra}{\bra{\rm out}}
\newcommand{\Der}{\mathrm{D}}

\allowdisplaybreaks

\begin{document}

\title{Riemannian optimization for linear optical problems}

\author{Pablo V. Parellada \orcidlink{0000-0002-6768-6671}}
\thanks{pablo.veganzones@uva.es}
\affiliation{Departamento de Física Teórica, Atómica y Óptica, Universidad de Valladolid, Spain}
\affiliation{Laboratory for Disruptive Interdisciplinary Science (LaDIS), Universidad de Valladolid, Spain.}
\date{17th September 2026}

\begin{abstract}

Many applications of linear optics in quantum science involve searching for some optimal configuration of a multiport interferometer. For example, to find heralded state preparations, one can optimize the interferometer to maximize the fidelity and success probability of the state preparation. In this paper, we derive a closed analytical formula for the Riemannian gradient of any differentiable function defined over the interferometer unitaries. We use this gradient in Riemannian optimization algorithms, such as gradient descent or BFGS, to find the optimal setup of the interferometer. As an interesting use case, we search for linear optical state preparations, improving some of the success probabilities reported in the literature for NOON or photon catalysis preparations. Finally, we show that our optimizer is orders of magnitude faster than other methods in the literature, opening the door for solving linear optical problems involving more modes and photons.

\end{abstract}

\maketitle

\section{Introduction}

The simplicity of passive multiport interferometers makes them an essential tool in quantum technologies \cite{tan_resurgence_2019}, with applications in quantum computing \cite{kok_linear_2007}, quantum communications \cite{duan_long-distance_2001}, quantum metrology \cite{barbieri_optical_2022} and quantum foundations \cite{genovese_research_2005}. However, this simplicity limits their applications: linear optics can only implement a small subset of all the possible unitary transformations in Fock space \cite{moyano-fernandez_linear_2017, oszmaniec_universal_2017}. For example, unitary gates essential for quantum computing, like the CNOT, cannot be implemented deterministically \cite{knill_scheme_2001, knill_quantum_2002, carolan_universal_2015}. Moreover, many useful entangled states, like Bell or NOON states, cannot be generated deterministically from a photon number state \cite{parellada_no-go_2023, gliniasty_simple_2024, parellada_lie_2026, singh_rigorous_2026}. This critical limitation can be circumvented by allowing probabilistic preparations conditioned on measuring a herald in some auxiliary modes. However, finding heralded preparations analytically is a hard problem, solved only for specific families of states \cite{aniello_engineering_2007, Chin_exponentially_2024, Bhatti_2025, Kang_2026}, and in general one resorts to optimization techniques \cite{forbes_heralded_2025}.


There are two main kinds of optimization algorithms found in quantum optics: discrete and continuous. Discrete algorithms, such as genetic or discrete learning algorithms, search for optimal combinations of gates from a certain toolbox. Some applications of discrete algorithms are related to the discovery of new quantum experiments \cite{krenn_automated_2016}, quantum metrology \cite{knott_search_2016}, generation of graph states \cite{Lee2023graphtheoretical, lin_graphiq_2024}, gate preparation \cite{Chernikov_heralded,zhang_variational_2024} or the study of entanglement \cite{lachman_certification_2026}. Continuous algorithms encode the problem in a cost function and make incremental updates to optimize it. Most of them rely on computing or approximating the gradient of the cost function, which gives the direction of steepest descent (or some variant of this method). The gradient is usually computed analytically or numerically on a classical computer, although it can sometimes be computed directly in the photonic circuit using parameter-shift rules \cite{exact_gradients_linear_optics2024, vqc_photonic_param_shift20225}. Continuous optimization has been applied to the study of entanglement \cite{stanisic_generating_2017}, state preparation \cite{gubarev_improved_2020, gubarev_fock_2021, fldzhyan_compact_2021, yao_riemannian_2024, hartnett_automated_2026, aralov_photon_2026}, gate preparation \cite{uskov_maximal_2009, arrazola_machine_2019, garcia-escartin_optimal_2021}, the discovery of new quantum experiments \cite{pytheus} or the training of quantum machine learning models \cite{chabaud_quantum_2021}.

In the literature, the most common approach for optimizing a passive photonic circuit is to decompose it into parametrized beam splitters and phase shifters (usually using the Clements decomposition \cite{clements_optimal_2016}). These parameters are then optimized to minimize some cost function, like the fidelity or success probability of a heralded state preparation \cite{stanisic_generating_2017, gubarev_improved_2020, fldzhyan_compact_2021, gubarev_fock_2021}. However, parametrized matrices can lead to cost functions with complicated landscapes. In recent decades, Riemannian optimization has gained traction as an alternative for optimizing functions on matrix manifolds \cite{absil_optimization_2008, boumal_introduction_2023} by directly updating the matrices without any parametrization. Given the prevalence of unitary matrices in quantum physics, Riemannian optimization has made its way into the field \cite{schulte_gradient_2010, luchnikov_qgopt_2021, luchnikov_riemannian_2021, quantum_classical_eigensolver}, including recent applications to training quantum circuits \cite{quantum_riemannian_flow} and Gaussian photonic circuits \cite{yao_riemannian_2024}.

In this work, we derive an analytical formula for the Riemannian gradient of any differentiable function defined over passive interferometers (unitary matrices). Additionally, we implement fast and efficient gradients for the fidelity and success probability of heralded state preparations. These gradients are then used by Riemannian optimization algorithms, like gradient descent or BFGS, which we implemented in QOptCraft \cite{aguado_qoptcraft_2023, qoptcraft_repo}. Our implementation has several advantages over other optimization methods found in the literature: (1) it optimizes directly on the unitary manifold, avoiding parametrizations of the unitary group that worsen the cost function's landscape; (2) it uses a closed analytical formula for the gradient, avoiding automatic differentiation; (3) it is orders of magnitude faster than existing optimizers; (4) in addition to first-order methods, it implements Riemannian quasi-Newton methods for faster convergence. These advantages allow us to improve on some of the success probabilities of state preparations found in the literature.

The paper is organized as follows. In Section \ref{sec: riemannian optimization}, we review the fundamentals of Riemannian optimization. In Section \ref{sec: linear optics}, we review the Lie group description of passive linear optics and use it to derive the closed-form analytical gradient of a linear optical function. In Section \ref{sec: examples}, we apply Riemannian optimization to interesting problems in linear optics, such as finding heralded state and gate preparations. In Section \ref{sec: benchmarks}, we provide some details about the implementation of our Riemannian optimizer and we benchmark it on several heralded state preparations against other optimizers proposed in the literature. In Section \ref{sec: discussion}, we discuss the results and draw the final conclusions. 

Additionally, in Appendix \ref{appendix: algorithms} we review more advanced Riemannian optimization algorithms, like quasi-Newton or momentum methods, and in Appendix \ref{appendix: invariants} we propose a faster algorithm for computing exact state preparations using Lie algebraic invariants.

\section{Riemannian optimization}\label{sec: riemannian optimization}

The objective of Riemannian optimization is to minimize or maximize functions on Riemannian manifolds. It avoids any parametrization of the manifold and, instead, updates the points directly on the manifold, tracing (approximately) geodesic paths. Its effectiveness is motivated by the observation that some functions are geodesically convex even if they are not convex at all, improving the optimization landscape. The theory of Riemannian optimization draws heavily from differential geometry \cite{lee_introduction_2013}, which exceeds the scope of this article, so we refer the reader to the standard references \cite{absil_optimization_2008, boumal_introduction_2023} for rigorous introductions.

In this section, we first provide a quick summary of the main tools from differential geometry that we shall need. Then, we explain how the Riemannian gradient descent algorithm works and we comment on other algorithms, which we detail in Appendix \ref{appendix: algorithms}.

\subsection{Manifolds and Lie groups}\label{sec: manifolds}

Intuitively, a smooth manifold is a set of points that can be parametrized locally by smooth ($\mathcal{C}^\infty$) functions. Unitary matrices form a manifold because they can be parametrized smoothly (for example, via the Reck \cite{reck_experimental_1994} or Clements \cite{clements_optimal_2016} decompositions). Unitary matrices also have a group structure: products and inverses of unitary matrices are unitary, the identity is unitary and the associative property holds. In fact, unitary matrices form a Lie group: a smooth manifold with a group structure such that the product and inverse of unitary matrices are smooth maps \cite{lee_introduction_2013}.

At each point, $x$, of a smooth manifold, $\mathcal{M}$, we can define the tangent space, $T_x \mathcal{M}$, as the vector space of velocities (at point $x$) of smooth curves in $\mathcal{M}$ passing through $x$. The union of all tangent spaces is also a smooth manifold called the tangent bundle, $T\mathcal{M}$, and its elements are $(x, v)\in T\mathcal{M}$, with $x\in \mathcal{M}$ and $v\in T_x \mathcal{M}$.

A particularly important example of a tangent space is the tangent space of a Lie group at the identity, $T_e \mathcal{G}$, which is nothing but its \textit{Lie algebra}. The group structure of the manifold provides an isomorphism between its tangent bundle $T\mathcal{G}$ and $T_e \mathcal{G} \times \mathcal{G}$ \cite{lee_introduction_2013}, which maps an \textit{a priori} complicated manifold into a simple Cartesian product. In matrix Lie groups, this isomorphism can be realized with the so-called \textit{left-trivialization}, which defines the tangent space at an arbitrary point $g \in \mathcal{G}$ as the products of $g$ with vectors in the Lie algebra: $T_g \mathcal{G} = \{g X \:| \: X \in T_e \mathcal{G}\}$.

Since tangent spaces are vector spaces, they can be endowed with an inner product, $\langle \cdot, \cdot\rangle_x$. If these inner products are defined on the whole tangent bundle and vary smoothly with $x$, we obtain a tensor field of inner products called the \textit{Riemannian metric} \cite[Def. 3.52]{boumal_introduction_2023}. A smooth manifold with such a metric is called a \textit{Riemannian manifold}.

Let us illustrate these concepts with an essential example in quantum theory: the Lie group of $m \times m$ unitary matrices, $\Um$. Its Lie algebra is the vector space of anti-Hermitian matrices, $\um = \{X \in \mathbb{C}^{m\times m} \:|\: X = -X^\dagger\}$. The (left-trivialized) tangent space at an arbitrary unitary $U$ is $T_U \Um=\{U X \:|\: X \in \um\}$. In each tangent space, we can define the Hilbert-Schmidt inner product $\langle X, Y\rangle=\tr{X^\dagger Y}$, which defines a metric on the unitary group. In fact, the left-trivialization reduces the inner product of two vectors, $UX, UY \in T_U \Um$, to the inner product of their corresponding vectors in the Lie algebra, $X, Y \in \um$: $\langle UX, UY\rangle = \tr{(UX)^\dagger (UY)}=\tr{X^\dagger Y}$.

\subsection{Functions on manifolds}

We can define functions that map points in a manifold to another manifold, $f: \mathcal{M} \rightarrow \mathcal{M}'$. The differential of a smooth map $f:\mathcal{M} \rightarrow \mathcal{M}'$ at a point $x\in \mathcal{M}$ is the linear map $\Der f(x):T_x\mathcal{M}\rightarrow T_{f(x)}\mathcal{M}'$ defined by
\begin{equation}\label{eq: differential}
    \Der f(x)[v] = \frac{d}{dt}f(\gamma(t)) \bigg|_{t=0}
\end{equation}
for any smooth curve $\gamma$ such that $\gamma(0)=x$ and $\gamma'(0)=v$. An important property of the differential is the \textit{chain rule}. Suppose we have another function between manifolds, $g: \mathcal{N} \rightarrow \mathcal{M}$. Then, the derivative of $f \circ g: \mathcal{N} \rightarrow \mathcal{M}'$ at $y\in \mathcal{N}$ in direction $w$ is
\begin{equation}\label{eq: chain rule}
    \Der (f\circ g)(y)[w] = \Der f(g(y))[\Der g(y)[w]] \:.
\end{equation}

Of particular importance are real functions, $f: \mathcal{M} \rightarrow \mathbb{R}$. For example, if $A$ and $B$ are Hermitian matrices, $f(U) = \tr{UAU^\dagger B}$ is a smooth real function on the unitary group. The \textit{Riemannian gradient} of a real function $f$ is then defined as the vector field $\nabla f$ such that for every $(x,v)\in T\mathcal{M}$:
\begin{equation}\label{eq: riemannian gradient}
    \Der f(x)[v] = \langle v, \nabla f(x)\rangle_x \:,
\end{equation}
where $\langle \cdot, \cdot\rangle_x$ is the Riemannian metric evaluated at $x$ \cite[Def. 3.58]{boumal_introduction_2023}. That means that every $\nabla f(x)$ is a tangent vector belonging to $T_x \mathcal{M}$.

\subsection{Riemannian gradient descent}\label{sec: rgd}

One simple approach to minimizing a real function is to select an arbitrary point of the domain and move it along the path traced by the (negative) gradient, which is the direction of the steepest descent of the function. When the gradient is zero, a local minimum is found. If the domain of the function is $\mathbb{R}^n$, the numerical algorithm for (Euclidean) gradient descent is given by
\begin{equation}\label{eq: eucl grad descent}
    x_{n+1} = x_n - \alpha \nabla f(x_n) \:,
\end{equation}
where $x_n \in \mathbb{R}^n$ is a point in the domain and $\alpha$ is a real number that tunes the step size of the parameter update. 

Since unitary matrices are not a vector space, we cannot naively apply the update \eqref{eq: eucl grad descent} directly to some unitary $U_n$. One approach to circumvent this problem is to parametrize unitary matrices and then apply gradient descent to those parameters. However, we can avoid parametrizing the unitaries altogether by using a gradient descent algorithm that maps the current unitary matrix to another unitary with a lower value of the function. This is the objective of \textit{Riemannian gradient descent}.

Recall that the Riemannian gradient at a point $x$, $\nabla f(x)$, is a tangent vector, so we need a way to map vectors in the tangent space back to points on the manifold. These special maps are called retractions \cite{absil_optimization_2008, boumal_introduction_2023}:
\begin{definition}
    A retraction on a manifold $\mathcal{M}$ is a smooth map $R$ from the tangent bundle $T\mathcal{M}$ onto $\mathcal{M}$
    \begin{equation*}
    \begin{aligned}
    R: \:\: T\mathcal{M} & \rightarrow \mathcal{M} \\
    (x,v) &\mapsto R_x(v)
    \end{aligned}    
    \end{equation*}
    such that any curve $\gamma(t) = R_x(tv)$ satisfies $\gamma(0)=x$ and $\dot\gamma(0)=v$.
\end{definition}

Several examples of retractions can be given for the unitary group. In our implementation of Riemannian optimization, we use by default the geometric exponential retraction, which is given by 
\begin{equation}\label{eq: exp retraction}
    R_U(UX) = U\exp(X) \:,
\end{equation}
where $(U, UX) \in T \Um$ and $\exp(\,\cdot\,)$ is the matrix exponential. In problems where the unitary matrices are very large (not our case), cheaper retractions like the Cayley transform, $R_U(UX) = U(\mathrm{Id} - \frac{1}{2}X)^{-1}(\mathrm{Id}+\frac{1}{2}X)$, can be used. Other examples of retractions can be found in \cite[Example 4.1.2]{absil_optimization_2008} for the orthogonal group (the proofs with unitaries are identical).


The fundamental algorithm for minimizing functions on a Riemannian manifold is \textit{Riemannian gradient descent} (Algorithm \ref{alg: RGD}). It is analogous to Euclidean gradient descent \eqref{eq: eucl grad descent}, but uses a retraction to map the current point and gradient into another point in the manifold. Its convergence to a local minimum of the cost function is at least linear \cite{boumal_introduction_2023}. If the retraction is the geometric exponential, the path described by Riemannian gradient descent is a geodesic in the manifold. This gives an advantage over Euclidean optimization because some functions are geodesically convex but not convex.

\begin{algorithm}[H]
\caption{Riemannian gradient descent \cite{boumal_introduction_2023}}\label{alg: RGD}
\begin{algorithmic}[1]
\State \textbf{Input:} $x_0 \in \mathcal{M}$
\For{$k = 0, 1, 2, \ldots$}
    \State Pick a step size $\alpha_k > 0$
    \State $s_k = -\alpha_k \operatorname{grad} f(x_k)$
    \State $x_{k+1} = \mathrm{R}_{x_k}(s_k)$
\EndFor
\State \textbf{Output:} Local minimum of $f$
\end{algorithmic}
\end{algorithm}

The step size $\alpha_k$ can be chosen in many ways. A common strategy is to use the so-called \textit{backtracking line-search} (Algorithm \ref{alg: line search}), which picks a step size that satisfies the Armijo-Goldstein condition \cite{boumal_introduction_2023}.

\begin{algorithm}[H]
\caption{Backtracking line-search \cite{boumal_introduction_2023}}\label{alg: line search}
\begin{algorithmic}[1]
\Require $\tau, r \in (0,1)$; for example, $\tau = \frac{1}{2}$ and $r = 10^{-4}$.
\State \textbf{Input:} $x \in \mathcal{M}$, $\bar{\alpha} > 0$.
\State Set $\alpha \leftarrow \bar{\alpha}$
\While{$f(x) - f(\mathrm{R}_x(-\alpha \operatorname{grad} f(x))) < r\alpha \|\operatorname{grad} f(x)\|^2$}
    \State Set $\alpha \leftarrow \tau\alpha$
\EndWhile  \Comment{Armijo-Goldstein condition is satisfied}
\State \textbf{Output:} $\alpha$
\end{algorithmic}
\end{algorithm}

Finally, we refer the reader to Appendix \ref{appendix: algorithms} for more advanced Riemannian optimization algorithms, such as momentum algorithms (e.g., Nesterov or ADAM) and quasi-Newton methods (e.g., BFGS or L-BFGS).

\section{Gradients of linear optical functions}\label{sec: linear optics}

In this section, we introduce the basic description of passive linear optics using the language of differential geometry, such as Lie groups, homomorphisms and Riemannian manifolds. After reviewing the basic concepts, we introduce the main result of the paper: an analytical formula for the gradient of a differentiable function defined on the manifold of passive linear interferometers. This formula can be plugged into the Riemannian optimization algorithms from Section \ref{sec: rgd} to solve interesting problems in linear optics, as shown in Section \ref{sec: examples}.

\subsection{Lie groups and linear optics}\label{sec: lie groups optics}

Passive linear interferometers, like the Mach-Zehnder interferometer, are optical circuits in which modes of light can propagate and interact via beam splitters and phase shifters. Mathematically, an interferometer with $m$ modes is modeled by an $m\times m$ unitary matrix, $S$, sometimes called the \textit{scattering matrix}. In classical optics, the scattering matrix describes the evolution of the amplitude of each mode of the electromagnetic field through the interferometer. In quantum optics, it describes the evolution of the creation and annihilation operators of each mode of the field \cite{kok_linear_2007,skaar_quantum_2004}.

More precisely, the evolution of a multiphoton state in Fock space, $\ket{\psi} \in \mathcal{F}$, through a linear interferometer with scattering matrix $S$ is given by $\varphi(S)\ket{\psi}$, where $\varphi: \Um \rightarrow \mathrm{U}(\mathcal{F})$ is a unitary representation of the group of scattering matrices on Fock space. Being a representation, $\varphi$ is in particular a group homomorphism,
\begin{equation}\label{eq: homomorphism property}
    \varphi(S_2 S_1) = \varphi(S_2) \varphi(S_1) \:, 
\end{equation}
which physically means that evolving a quantum state through an interferometer $S = S_2 S_1$ is the same as evolving it first through $S_1$ and then through $S_2$.

Since passive interferometers preserve the total number of photons, $\varphi$ leaves invariant each subspace of Fock space with $n$ total photons, $\mathcal{F}^m_n = \mathrm{span}\left(\{\ket{n_1 \ldots n_m} : n_i \geq 0, \sum_i n_i = n\}\right)$, which has dimension $D = \binom{m+n-1}{n}$. These subspaces
are precisely the irreducible subrepresentations into which $\varphi$ decomposes \cite{aniello_exploring_2006}, so we can work in one photon-number subspace at a time. Restricted to $\mathcal{F}^m_n$, $\varphi: \Um \rightarrow \UD$ is a homomorphism between finite-dimensional unitary groups. It is also a smooth map, since the entries of $\varphi(S)$ are permanents of submatrices of $S$ \cite{scheel_permanents_2004, aaronson_computational_2010}.

Since $\varphi$ is smooth, we can compute its differential (which will appear later on when computing the gradient of cost functions). The differential of a Lie group homomorphism at the identity, $\dphi \equiv D\varphi(\mathrm{Id})$, is a Lie algebra homomorphism \cite{garcia-escartin_multiple_2019}:
\begin{equation}\label{eq: algebra homomorphism}
    \begin{aligned}
        d\varphi: \:\: \um \:&\rightarrow\: \imum \subset \uD \\
        G \: &\mapsto\: \sum_{j k} G_{j k} a_j^\dagger a_k  \:,
    \end{aligned}
\end{equation}
where $\um$ is the Lie algebra of $\Um$, $\imum$ is the linear optical subalgebra of $\uD$ and the operator $\dphi(G)=\sum_{j k} G_{j k} a_j^\dagger a_k$ is restricted to $\mathcal{F}^m_n$.

Finally, recall that these algebras are inner product spaces with the Hilbert-Schmidt inner product $\langle X, Y\rangle=\tr{X^\dagger Y}$ (Section \ref{sec: manifolds}).

\subsection{Computing linear optical gradients}

In this section we provide the main result of this paper: an analytical formula for the gradient of a real function defined on the manifold of linear optical unitaries. This gradient can be used by any Riemannian optimization algorithm, like gradient descent or BFGS, to solve problems in linear optics, as we show in Section \ref{sec: examples}.

First, suppose we have a real function $f: \UD \rightarrow \mathbb{R}$, where $\UD$ is the manifold of all quantum evolution operators (not only passive linear optical ones) acting on $\hnm$. This function will play the role of an objective function that we want to minimize. Its Riemannian gradient with respect to $U\in \UD$ is a vector field $\nabla_U f$ given by Eq. \eqref{eq: riemannian gradient}:
\begin{equation}\label{eq: der cost U}
    \Der f(U)[UX] = \expect{UX, \nabla_U f(U)} \:,
\end{equation}
for all vector directions $UX \in T_U \UD$ (or, equivalently, $X \in \uD$).

Now suppose that $U$ is not an arbitrary unitary evolution, but a passive linear optical one; that is, suppose that $U=\varphi(S)$, where $S$ is the matrix of the interferometer. Composing $f(U)$ with $U=\varphi(S)$ results in a new function $g \equiv f \circ \varphi: \Um \rightarrow \mathbb{R}$ defined over the manifold of scattering matrices. To compute the gradient of $g$ with respect to $S\in \Um$, we first compute its differential using the chain rule \eqref{eq: chain rule}:
\begin{equation}
    \Der g(S)[SY] = \Der f(\varphi(S))[\Der \varphi(S)[SY]]
\end{equation}
where $SY \in T_S \Um$ is a vector direction (or, equivalently, $Y\in \um$). Substituting $Df$ from Eq. \eqref{eq: der cost U}:
\begin{equation}\label{eq: der cost S}
    \Der g(S)[SY] = \expect{\Der \varphi(S)[SY], \nabla_U f(\varphi(S)) } \:.
\end{equation}

We simplify $\Der \varphi(S)[SY]$ using the definition of differential \eqref{eq: differential}, for a smooth curve $\gamma(t) = S \exp(t Y)$, and the homomorphism property of $\varphi$ \eqref{eq: homomorphism property}:
\begin{equation}\label{eq: differential group homomorphism}
    \begin{aligned}
    \Der \varphi (S)[SY] &= \frac{d}{dt} \bigg|_{t=0} \varphi(S \exp(t Y)) \\[1mm]
    &= \varphi(S) \,\frac{d}{dt} \bigg|_{t=0} \varphi(\exp(t Y))  \\[1mm]
    &= \varphi(S) \Der \varphi(\mathrm{Id})[Y]   \\[1mm]
    &= \varphi(S) d\varphi(Y) \:,
    \end{aligned}
\end{equation}
where the last step is simply the definition of the algebra homomorphism, $d\varphi$, as the differential of $\varphi$ at the identity.

Substituting Eq. \eqref{eq: differential group homomorphism} in Eq. \eqref{eq: der cost S}, we obtain
\begin{equation}\label{eq: der cost S 1}
    \begin{aligned}
    \Der g(S)[SY] &= \expect{\varphi(S) d\varphi(Y), \nabla_U f(\varphi(S)) } \\[1mm]
    &= \expect{d\varphi(Y), \varphi(S)^\dagger \nabla_U f(\varphi(S)) } \:.
    \end{aligned}
\end{equation}
We can isolate $Y$ by introducing the adjoint map of $\dphi$, i.e. a linear map $\dphi^\ast:\uD \rightarrow \um $ such that $\expect{\dphi(Y), X} = \expect{Y, \dphi^\ast(X)}$ for every $Y\in \um$ and every $X\in \uD$. The adjoint map can be computed easily after choosing an orthonormal basis, as the following lemma shows:
\begin{lemma}\label{lemma: adjoint map}
    Let $\{G_i\}$ be an orthonormal basis of $\um$ and let $\dphi: \um \rightarrow \uD$ be the algebra homomorphism given by Eq. \eqref{eq: algebra homomorphism}. The adjoint map of $\dphi$, denoted by $\dphi^\ast:\uD \rightarrow \um$, is given by:
    \begin{equation}\label{eq: adjoint algebra homomorphism}
        \dphi^\ast(X) = \sum_{i=1}^{m^2} \expect{\dphi(G_i), X} G_i
    \end{equation}
    for every $X\in \uD$.
\end{lemma}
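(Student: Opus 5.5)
The plan is to verify directly that the map defined in Eq.~\eqref{eq: adjoint algebra homomorphism} satisfies the defining property of the adjoint, $\expect{\dphi(Y), X} = \expect{Y, \dphi^\ast(X)}$ for all $Y\in\um$ and $X\in\uD$. Uniqueness then follows from non-degeneracy of the inner product on $\um$: if two maps both satisfied the property, their difference would be orthogonal to every $Y$, hence zero.

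The first step is to settle that everything is happening over the reals. Here $\um$ and $\uD$ are \emph{real} vector spaces; $\um$ has real dimension $m^2$, which explains the range of the sum in the lemma. On anti-Hermitian matrices the Hilbert--Schmidt product is real, since for $X^\dagger=-X$ and $Y^\dagger=-Y$,
\begin{equation*}
\tr{X^\dagger Y}=-\tr{XY}=-\tr{YX}=\tr{Y^\dagger X}=\overline{\tr{X^\dagger Y}} \:.
\end{equation*}
Thus $\expect{\cdot,\cdot}$ is a genuine real inner product on both algebras. In particular:
\begin{itemize}
\item the coefficients $\expect{\dphi(G_i),X}$ in Eq.~\eqref{eq: adjoint algebra homomorphism} are real, because $\dphi(G_i)\in\imum\subset\uD$;
\item hence $\dphi^\ast(X)$ is a real combination of the $G_i$ and lies in $\um$;
\item the orthonormal basis $\{G_i\}$ is to be understood over $\mathbb{R}$.
\end{itemize}

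Next, fix $Y\in\um$ and expand it in the orthonormal basis as $Y=\sum_i \expect{G_i,Y}G_i$, with real coefficients. The map $\dphi$ in Eq.~\eqref{eq: algebra homomorphism} is $\mathbb{R}$-linear, so
\begin{equation*}
\dphi(Y)=\sum_i \expect{G_i,Y}\,\dphi(G_i) \:.
\end{equation*}
Pulling the real coefficients out of the inner product gives
\begin{equation*}
\expect{\dphi(Y),X}=\sum_i \expect{G_i,Y}\expect{\dphi(G_i),X} \:.
\end{equation*}

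On the other side, bilinearity and symmetry of the real inner product give
\begin{equation*}
\expect{Y,\dphi^\ast(X)}=\sum_i \expect{\dphi(G_i),X}\expect{Y,G_i} \:.
\end{equation*}
Since $\expect{Y,G_i}=\expect{G_i,Y}$, the two expressions coincide, which proves the lemma.

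The only real obstacle is the realness issue in the first step. If one treated $\expect{\cdot,\cdot}$ as a sesquilinear complex product, complex conjugates would appear on the coefficients and the identity would seem to need conjugation. The computation above shows that all the relevant traces are real, so no conjugation arises and the verification goes through.
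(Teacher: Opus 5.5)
Your proof is correct and follows the same route as the paper: expand $Y$ in the orthonormal basis $\{G_i\}$, use linearity of $\dphi$ to pull the coefficients out of $\expect{\dphi(Y),X}$, and regroup them into $\expect{Y, \sum_i \expect{\dphi(G_i),X}G_i}$. Your explicit check that the Hilbert--Schmidt product is real on anti-Hermitian matrices, so that no conjugates appear when coefficients move between slots, and your uniqueness remark are points the paper leaves implicit; they tighten the argument without changing it.
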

\begin{proof}
    The algebra homomorphism can be computed from the orthonormal basis $\{G_i\}$ as
    \begin{equation}\label{eq: dphi coords}
        \dphi(Y) = \dphi(\textstyle\sum_i \expect{Y, G_i} G_i) = \sum_i \expect{Y, G_i} \dphi(G_i) \:.
    \end{equation}
    Given an inner product $\expect{\dphi(Y), X}$, we can rearrange the terms to isolate the formula for the adjoint map $\dphi^\ast$:
    \begin{equation}
        \begin{aligned}
            \expect{\dphi(Y), X} &=\expect{\textstyle\sum_i \expect{Y, G_i} \dphi(G_i), X} \\[1mm]
            &=\textstyle\sum_i \expect{Y, G_i} \expect{\dphi(G_i), X} \\[1mm]
             &= \expect{Y, \textstyle\sum_i\expect{\dphi(G_i), X} G_i} \\[1mm]
             &:= \expect{Y, \dphi^\ast(X)} \:.
        \end{aligned}
    \end{equation}
    Thus, the adjoint is given by Eq. \eqref{eq: adjoint algebra homomorphism}. 
\end{proof}
Finally, applying Lemma \ref{lemma: adjoint map} to Eq. \eqref{eq: der cost S 1}, the vector direction $SY$ is isolated from the derivative of $g$:
\begin{equation}\label{eq: der cost S 2}
    \begin{aligned}
    \Der g(S)[SY] &= \expect{d\varphi(Y), \varphi(S)^\dagger \nabla_U f(\varphi(S))} 
    \\[1mm]
     &=\expect{Y, \dphi^\ast\left(\varphi(S)^\dagger \nabla_U f(\varphi(S)) \right)}
    \\[1mm]
     &=\expect{SY, S\dphi^\ast\left(\varphi(S)^\dagger \nabla_U f(\varphi(S)) \right)}.
    \end{aligned}
\end{equation}
Thence, we obtain the Riemannian gradient of $g$ on the right-hand side of the inner product (recall Eq. \eqref{eq: riemannian gradient}). We wrap up this important result in the following theorem:
\begin{theorem}\label{thm: grad S}
Let $f: \UD \rightarrow \mathbb{R}$ be a real differentiable function with gradient $\nabla_U f$. Let $\varphi: \Um \rightarrow \UD$ be the $n$-photon homomorphism and $d\varphi$ its associated algebra homomorphism. Then, the Riemannian gradient of $g=f\circ \varphi$ is
\begin{equation}\label{eq: grad cost S}
    \nablag (S) = S\dphi^\ast\Big(\varphi(S)^\dagger \nabla_U f(\varphi(S)) \Big) \:,
\end{equation}
where $\dphi^\ast$, given by Eq. \eqref{eq: adjoint algebra homomorphism}, is the adjoint of $\dphi$.
\end{theorem}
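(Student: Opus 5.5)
The plan is to verify directly that the right-hand side of Eq. \eqref{eq: grad cost S} satisfies the defining identity \eqref{eq: riemannian gradient} of the Riemannian gradient on $\Um$, and that it lies in $T_S\Um$. Uniqueness then follows because the metric is nondegenerate.

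First I fix an arbitrary direction $SY \in T_S\Um$ with $Y\in\um$, using the left-trivialization of Section \ref{sec: manifolds}. I apply the chain rule \eqref{eq: chain rule} to $g=f\circ\varphi$. The inner derivative is simplified with Eq. \eqref{eq: differential group homomorphism}, which uses the curve $S\exp(tY)$ and the homomorphism property \eqref{eq: homomorphism property} to give $\Der\varphi(S)[SY]=\varphi(S)\dphi(Y)$. Since $\varphi(S)\dphi(Y)$ lies in $T_{\varphi(S)}\UD$, the definition \eqref{eq: der cost U} of $\nabla_U f$ applies with $X=\dphi(Y)$. This yields the first line of Eq. \eqref{eq: der cost S 1}.

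Next I move the unitary across the Hilbert--Schmidt product using $\tr{(VA)^\dagger B}=\tr{A^\dagger V^\dagger B}$. Then I invoke Lemma \ref{lemma: adjoint map} to replace $\expect{\dphi(Y),X}$ by $\expect{Y,\dphi^\ast(X)}$, with $X=\varphi(S)^\dagger\nabla_U f(\varphi(S))$. Finally I reinsert $S$ on both sides via $\expect{SY,SZ}=\expect{Y,Z}$, which reproduces Eq. \eqref{eq: der cost S 2}.

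The step I expect to need the most care is checking that the lemma applies and that the result is a genuine tangent vector. For the lemma, I must show that $X$ is in $\uD$, or at least that using $X$ is harmless. Since $\nabla_U f(U)\in T_U\UD=U\,\uD$, we get $\varphi(S)^\dagger\nabla_U f(\varphi(S))\in\uD$, so the lemma applies as stated.

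For tangency, $\dphi^\ast(X)$ must lie in $\um$ for $S\dphi^\ast(X)$ to be in $T_S\Um$. Formula \eqref{eq: adjoint algebra homomorphism} writes $\dphi^\ast(X)$ as a combination of the $G_i$ with coefficients $\expect{\dphi(G_i),X}$. These coefficients are real, because the Hilbert--Schmidt product of two anti-Hermitian matrices is real: $\tr{A^\dagger B}=-\tr{AB}$, and its conjugate is $-\tr{B^\dagger A^\dagger}=-\tr{BA}$, which is the same. Here I treat $\um$ as a real vector space with the real inner product, so the orthonormal basis has $m^2$ elements.

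Combining the two checks, $S\dphi^\ast\big(\varphi(S)^\dagger\nabla_U f(\varphi(S))\big)$ is a tangent vector at $S$ whose inner product with every $SY$ equals $\Der g(S)[SY]$. Hence it is the Riemannian gradient of $g$.
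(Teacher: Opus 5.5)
Your proposal is correct and follows the paper's own derivation essentially step by step: the chain rule, $\Der\varphi(S)[SY]=\varphi(S)\dphi(Y)$, moving $\varphi(S)$ across the Hilbert--Schmidt product, applying Lemma~\ref{lemma: adjoint map}, and reinserting $S$. Your extra checks are correct details that the paper leaves implicit: that $\varphi(S)^\dagger\nabla_U f(\varphi(S))\in\uD$, that $\dphi^\ast$ lands in $\um$ because the coefficients are real, and that the gradient is unique by nondegeneracy of the metric.
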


\subsection{Gradient examples}\label{sec: gradient example}

Theorem \ref{thm: grad S} provides a practical recipe for computing the gradient of a cost function defined on the manifold of passive linear interferometers. In this section, we provide examples of gradients that will be useful in Section \ref{sec: examples} to find heralded state and gate preparations.

Many state preparation problems with linear optics can be encoded in cost functions that are variants of:
\begin{equation}\label{eq: cost S}
    g(S) = f(\varphi(S)) = \tr{\varphi(S) B  \varphi(S)^\dagger C} \,,
\end{equation}
where $B$ and $C$ are Hermitian matrices and $\varphi(S)$ is the quantum evolution unitary associated with the interferometer $S$. Depending on the setting, $g(S)$ may encode the fidelity or the success probability of a heralded state preparation.

To compute the gradient of \eqref{eq: cost S}, it suffices to compute the gradient of $f(U) = \tr{U B  U^\dagger C}$ and plug it into Theorem \ref{thm: grad S}. To compute the differential \eqref{eq: differential} of $f(U)$ in direction $UX$, we consider a smooth curve $\gamma(t) = U \exp(t X)$:
\begin{equation}
\begin{aligned}
    \Der f(U)[UX] &= \frac{d}{dt}f(U \exp(tX)) \bigg|_{t=0} \\[1mm]
    &= \tr{U X B U^\dagger C + U B (UX)^\dagger C}  \\[1mm]
    &= \tr{XB U^\dagger C U - XU^\dagger C U B} \\[1mm]
    &= \expect{X, -[B, U^\dagger C U]}  \\[1mm] 
    &= \expect{UX, U[U^\dagger C U, B]} \:.
\end{aligned}
\end{equation}
Hence, the Riemannian gradient \eqref{eq: riemannian gradient} of $f$ at $U$ is
\begin{equation}\label{eq: grad cost U}
    \nabla_U f (U) = U[U^\dagger C U, B] \:.
\end{equation}
And, substituting $\nabla_U f$ in Eq. \eqref{eq: grad cost S}, we obtain $\nablag$:
\begin{equation}\label{eq: grad g generic}
    \nablag (S) = S\dphi^\ast\left(\left[\varphi(S)^\dagger C \varphi(S), B\right] \right)\:.
\end{equation}

In Section \ref{sec: gate preparation}, we address the preparation of unitary gates with linear optics. A common cost function for these problems is:
\begin{equation}
    g(S) = \left|\tr{A^\dagger B\varphi(S)C}\right|^2 \:,
\end{equation}
where $A$ is the target matrix (not necessarily unitary) and $B$ and $C$ are complex matrices. Similarly to Eq. \eqref{eq: grad g generic}, its gradient can be computed using basic differentiation rules and Theorem \ref{thm: grad S}:
\begin{equation}\label{eq: grad gate generic}
    \nablag (S) = S\dphi^\ast\left(X-X^\dagger\right)\:,
\end{equation}
where $X = \tr{A^\dagger B\varphi(S)C} \varphi(S)^\dagger B^\dagger A C^\dagger $.

\section{Examples}\label{sec: examples}

In this section, we use Riemannian optimization with the analytical gradients derived in the previous section to tackle two essential problems in passive linear optics: (1) heralded state preparation and (2) heralded gate preparation. In particular, we improved the known success probabilities of some NOON and photon catalysis preparations.

Additionally, in Appendix \ref{appendix: invariants} we give a recipe for faster exact state preparation (states in the same orbit) using Riemannian optimization and Lie algebraic invariants.

\subsection{Heralded state preparation}\label{sec: heralded state prep}

In a heralded state preparation, a target state is prepared probabilistically after measuring a certain heralding pattern in some auxiliary modes. Unlike postselection, heralding does not destroy the target state, so it can be used in further applications \cite{forbes_heralded_2025}.

More precisely, suppose $\inket \in \mathcal{F}^m_n$ is a given input state and we want to prepare a certain target state, $\tket \in \mathcal{F}^{m'}_{n'}$, after measuring a herald, $\hket \in \mathcal{F}^{m-m'}_{n-n'}$, in $m-m'$ auxiliary modes. The heralded preparation problem (Figure \ref{fig: heralded circuit}) consists in finding an interferometer $S$ such that
\begin{equation}
    \outket = \varphi(S) \inket = \sqrt{p} \tket \hket + \sum_i \alpha_i \ket{\psi_i} \ket{ h^i_\perp} \:,
\end{equation}
where $p$ is the success probability of the preparation, each $\ket{\psi_i}\in\mathcal{F}^{m'}_{n'}$ is an arbitrary state and each $\ket{h^i_\perp}\in \mathcal{F}^{m-m'}_{n-n'}$ is orthogonal to the herald.

\begin{figure}[ht]
\centering
\includegraphics[scale=1.8]{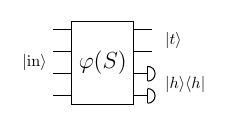}
\caption{Circuit of a heralded state preparation where $\inket$ is the input state, $\tket$ is the target state and $\hket$ is some herald that we wish to measure.}
\label{fig: heralded circuit}
\end{figure}

\begin{table}[t]
\centering
\setlength{\tabcolsep}{6pt}
\renewcommand{\arraystretch}{1.2}
\setlength{\aboverulesep}{0.1ex}
\setlength{\belowrulesep}{0.6ex}
\begin{tabular}{cccc}
\toprule
\addlinespace
\textbf{State prep.} & $\boldsymbol{m}$ & $\boldsymbol{n}$ & $\boldsymbol{\mathcal{P}}$ \\
\addlinespace
\midrule
Bell \cite{stanisic_generating_2017} & 8 & 4 & 18.75\% \\
Bell w/ state injection \cite{fldzhyan_compact_2021}  & 6 & 4 & 7.4\%  \\
NOON, $\ket{40} + \ket{04}$ \cite{lee_noon} & 4 & 6 & 16.53\%$^\ast$ \\
NOON, $\ket{50} + \ket{05}$ & 4 & 6 & 9.168\% \\
GHZ \cite{gubarev_improved_2020}  & 10 & 6  & 0.926\% \\
GHZ linear graph state \cite{bartolucci_creation_2021, hartnett_automated_2026} & 10 & 6 & 1.852\%  \\
GHZ K3 graph state \cite{bartolucci_creation_2021, hartnett_automated_2026} & 12 & 6 & 3.125\%  \\
$\ket{\psi_3}$ w/ photon catalysis  \cite{aralov_photon_2026}  & 3 & 4  & 22.5\%$^\ast$  \\
$\ket{\psi_6} $  w/ photon catalysis  \cite{aralov_photon_2026}  & 2 & 3 & 34.6\%$^\ast$ \\
$\ket{R_2} $  w/ photon catalysis  \cite{aralov_photon_2026}  & 3 & 3  &  100\%$^\ast$ \\
\addlinespace
\bottomrule
\end{tabular}
\caption{Examples of heralded state preparations with different circuit architectures with $m$ modes and $n$ total photons. $\mathcal{P}$ is the highest success probability of the preparation that we could find (for an infidelity lower than $10^{-7}$). The asterisk means that we found a better success probability than the one in the references. The exact input states and heralds, as well as the choices of cost functions and optimizers, can be found in the accompanying repository \cite{optimization_repo}.
}
\label{tab: state preparation examples}
\end{table}

This problem can be turned into an optimization problem by maximizing two figures of merit. One is the probability of measuring the herald,
\begin{equation}\label{eq: heralded prob}
\begin{aligned}    
    \mathcal{P}(\outket, h) &= \|P_{ h} \outket\|_2^2 \\
    &= \tr{\varphi(S) \inket \inbra  \varphi(S)^\dagger P_{h}} \,,
\end{aligned}
\end{equation}
where $P_h$ is the projector onto the subspace of states with heralding pattern $\hket$. The other is the (unnormalized) fidelity of the output with the heralded target, $\thket\equiv\tket\otimes\hket$:
\begin{equation}\label{eq: fidelity}
\begin{aligned}    
    &\mathcal{\tilde F}(\outket, \thket) 
    = |\outbra t, h\rangle|^2 \\[1mm]
    &= \tr{\varphi(S) \inket \inbra  \varphi(S)^\dagger \thket \thbra} \,,
\end{aligned}
\end{equation}
which can be normalized by dividing it by $\mathcal{P}$. As it turns out, both figures of merit are special cases of Eq. \eqref{eq: cost S}. Simply substitute $B=\inket \inbra$ and $C=P_{h} \tket \tbra P_{ h}$ for the fidelity, and $B=\inket \inbra$ and $C = P_{h}$ for the probability.

These two figures of merit can be combined to create a single cost function. A popular choice in the literature \cite{stanisic_generating_2017, gubarev_improved_2020, fldzhyan_compact_2021} is 
\begin{equation}\label{eq: cost gubarev}
    f(S) = -\mathcal{P}^\alpha \mathcal{F}^\beta \:,
\end{equation}
where $ \mathcal{F} = \mathcal{\tilde F}/\mathcal{P}$ is the normalized fidelity and $\alpha$ and $\beta$ are exponents chosen to prioritize a high fidelity over a high success probability. A different cost function is found in \cite{yao_riemannian_2024}, where the authors optimize a piecewise function that first maximizes the fidelity and then maximizes the probability:
\begin{equation}\label{eq: cost piecewise}
    f(S) =
    \begin{cases}
         1 - \mathcal{F}\:, & \text{if }  \mathcal{F} < 0.999 \:, \\ 
        1 - \mathcal{F} - 0.001 \mathcal{P}  \:, & \text{if }  \mathcal{F} \geq 0.999 \:.
    \end{cases}
\end{equation}
Since we already computed the gradient of Eq. \eqref{eq: cost S} in Section \ref{sec: gradient example}, we can apply the sum and product rules of derivatives to compute the gradients of the cost functions \eqref{eq: cost gubarev} and \eqref{eq: cost piecewise}.

In addition to the heralded state preparation scheme with one unitary and one herald (Figure \ref{fig: heralded circuit}), we implemented other architectures with multiple unitaries in our library, such as Adaptive Linear Optics \cite{chabaud_quantum_2021, hoch_quantum_2025}, Single-mode State Injection \cite{state_injection} and Photon Catalysis \cite{aralov_photon_2026}. Computing the derivatives of these circuits is straightforward: one simply has to take the partial derivative with respect to each unitary (recall that any measurement, injection or photon addition between the unitaries is just a linear operator whose differential is itself).

In Table \ref{tab: state preparation examples}, we show examples of heralded preparations obtained with our Riemannian optimizer implemented in QOptCraft \cite{qoptcraft_repo, optimization_repo}. We show a varied selection, including heralded preparations taken from \cite{forbes_heralded_2025} and references therein, and photon catalysis preparations from \cite{aralov_photon_2026}. The reported probabilities in all cases are either equal to the known ones or higher (marked with an asterisk).

For example, we found a better success probability for the generation of the NOON state $\ket{40} + \ket{04}$ than the one reported in \cite{lee_noon} (for the same input and herald), a result that addresses their open question of whether their setup was optimal. Additionally, for the generation of the NOON state $\ket{50} + \ket{05}$, we found a success probability of 9.168\% using $\ket{2,2,1,1}$ as input, which improves on the 5.64\% probability found by VanMeter et al. \cite{vanmeter_general_2007} using $\ket{2,2,2,0}$ as input. As an example, we show in Figure \ref{fig: noon circuit} the circuit for the $\ket{50} + \ket{05}$ preparation.

\begin{figure}[ht]
\centering
\includegraphics[scale=0.28]{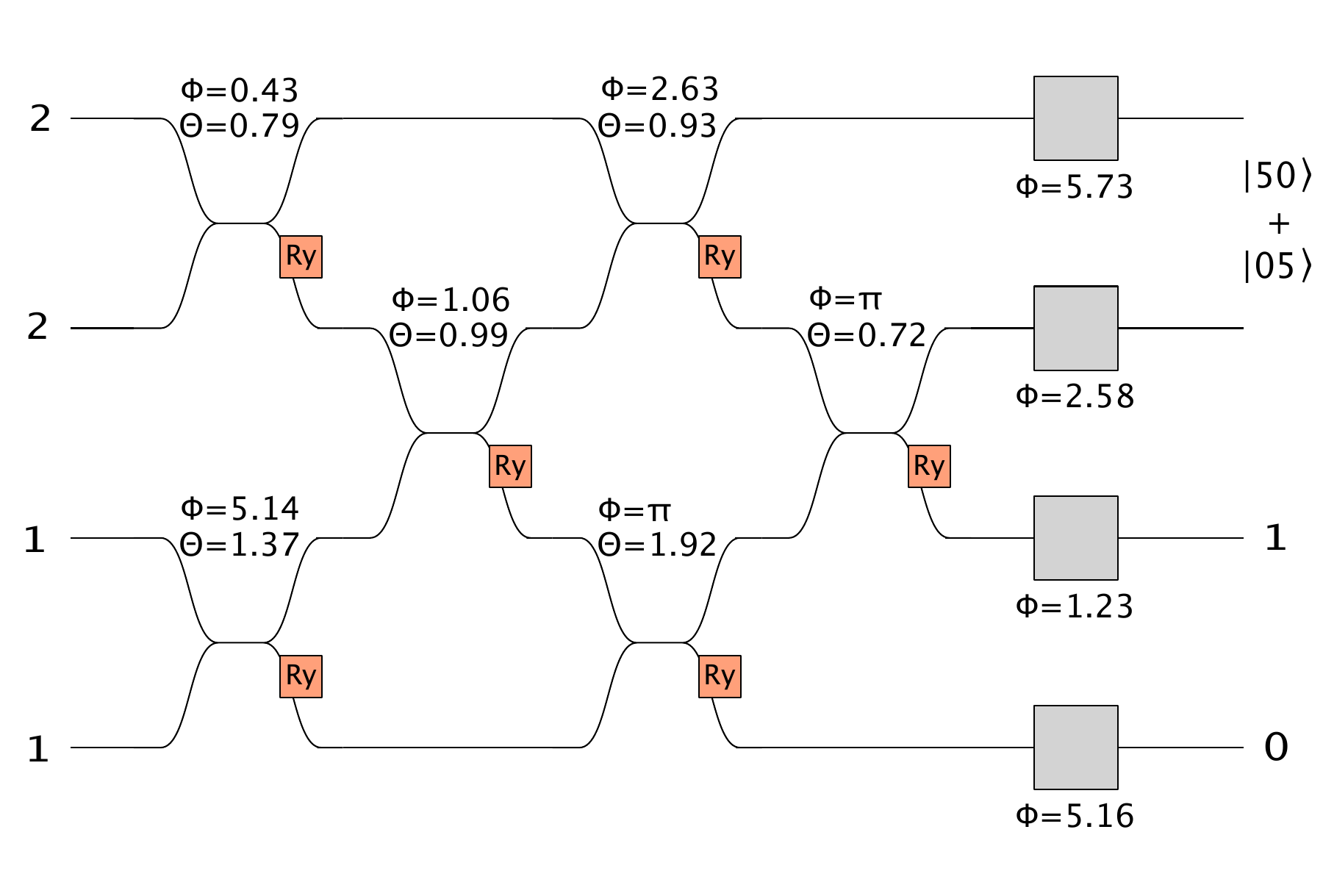}
\caption{Passive circuit for preparing the NOON state $(\ket{50}+\ket{05})/\sqrt{2}$ using $\ket{2,2,1,1}$ as input and $\ket{1,0}$ as herald. The preparation succeeds with probability 9.168\%, which improves on the 5.64\% found by VanMeter et al. \cite{vanmeter_general_2007} for the same NOON. The circuit diagram was drawn using Perceval \cite{heurtel_perceval_2023}.}
\label{fig: noon circuit}
\end{figure}

Regarding the photon catalysis preparations, our optimizer improved some success probabilities from Table IV of \cite{aralov_photon_2026}. Some examples are shown in Table \ref{tab: state preparation examples} with an asterisk: for $\ket{\psi_3}$ we found $22.5\%$ instead of 19\%; for $\ket{\psi_6}$, 34.6\% instead of 32\%; and, for $\ket{R_2}$, 100\% instead of 82\%. Additional improvements for other states can be found in the accompanying notebooks \cite{optimization_repo}.

\subsection{Heralded gates}\label{sec: gate preparation}

We now switch to a related problem: heralded gate preparation with linear optics, which is useful for optical quantum computing and for preparing graph states \cite{kok_linear_2007}.

To prepare an $N_q$-qubit quantum gate with linear optics, the first step is to define a qubit basis, $\{\ket{c_j}: 1\leq j \leq 2^{N_q} \}$. A common choice is the dual-rail encoding, where each qubit is represented by a pair of modes and one photon \cite{kok_linear_2007}: $\ket{0}_L = \ket{1,0}$ and $\ket{1}_L = \ket{0,1}$. Two-qubit gates like the CNOT cannot be implemented deterministically with linear optics \cite{knill_scheme_2001, knill_quantum_2002, carolan_universal_2015}. Any preparation relies on having auxiliary photons in the input, $\ket{a}$, and measuring a heralding pattern, $\hket$, in the auxiliary modes. In \cite{uskov_maximal_2009}, the heralded gate preparation is modeled by restricting the input states of the multi-photon unitary, $\varphi(S)$, to the computational basis plus the auxiliary photons, $\{\ket{c_j}\ket{a}\}$. Meanwhile, the output states are restricted to the subspace of Fock states with the desired heralding pattern, $\{\ket{k_j}\hket\}$. Mathematically, the restricted multi-photon unitary is given by
\begin{equation}
    \mathrm{A}(S) = V_{\text{out}}^\dagger \varphi(S) V_{\text{in}} \:,
\end{equation}
where $V_{\text{in}} = \sum_{j}
\ket{c_j, a}
\bra{c_j}$ and 
$V_{\text{out}} = \sum_{j}\ket{k_j, h}\bra{k_j}$ are isometries (matrices such that $V^\dagger V = \mathrm{Id}$). The elements $\bra{c_j}\mathrm{A}(S)\ket{c_i}$ implement the quantum gate, while the elements $\bra{k_j}\mathrm{A}(S)\ket{c_i}$, for $\ket{k_j}\neq \ket{c_j}$, are undesired transitions.

To find the $A(S)$ that best approximates the target gate, $A_t$, Uskov et al. \cite{uskov_maximal_2009} optimize a piecewise function \eqref{eq: cost piecewise} that first maximizes the gate fidelity,
\begin{equation}\label{eq: gate fidelity}
    \mathcal{F}(S) = \frac{\expect{A, A_t} \expect{A_t, A}}{\expect{A, A}\expect{A_t, A_t}}  \:,  
\end{equation}
where $\expect{A, B} = \mathrm{Tr}(A^\dagger B)$ is the Hilbert-Schmidt inner product, and then maximizes the probability of measuring the herald,
\begin{equation}\label{eq: gate success}
    \mathcal{P}(S) = \frac{\expect{A, A}}{2^{N_q}}\:.
\end{equation}

While Uskov et al. optimize a parametrization of the algebra, $S=\exp(\sum_j G_j)$, we can apply Riemannian optimization directly to the cost functions \eqref{eq: gate fidelity}–\eqref{eq: gate success}. The gradient of $|\expect{A, A_t}|^2$ can be computed from Eq. \eqref{eq: grad gate generic} by identifying $A\equiv A_t$, $B\equiv  V_{\text{out}}^\dagger$ and $C\equiv V_{\text{in}}$. The gradient of $\expect{A, A}$ can be computed similarly to the gradient of Eq. \eqref{eq: cost S}. Once we have the gradient of the fidelity and the success probability, they can be fed to the Riemannian optimizer to find heralded gate preparations.

\begin{table}[H]
\centering
\setlength{\tabcolsep}{6pt}
\renewcommand{\arraystretch}{1.2}
\setlength{\aboverulesep}{0.1ex}
\setlength{\belowrulesep}{0.6ex}
\begin{tabular}{cccc}
\toprule
\addlinespace
\textbf{Gate prep.}  & \shortstack{\textbf{Mean time}\\\textbf{(conv.) [s]}} & \shortstack{\textbf{Mean iters}\\\textbf{(conv.)}}  & $\boldsymbol{\mathcal{P}}$  \\
\addlinespace
\midrule
CZ & 0.004 & 44  & 7.4\% \\
CCZ & 0.3 & 460 & 0.34\% \\
\bottomrule
\end{tabular}
\caption{Results for the optimization of a heralded CZ gate (6 modes, 4 photons) and a heralded CCZ gate (12 modes, 6 photons) using our Riemannian BFGS optimizer. $\mathcal{P}$ is the highest success probability of the preparation that we could find (for an infidelity lower than $10^{-7}$), which agrees with the values found in \cite{uskov_maximal_2009}.
}
\label{tab: gate preparation examples}
\end{table}

In Table \ref{tab: gate preparation examples}, we show the results of the optimization of a heralded CZ gate (with 2 auxiliary modes and 2 auxiliary photons) and a heralded CCZ gate (with 6 auxiliary modes and 3 auxiliary photons), both in dual-rail encoding. As we can see, in both cases the Riemannian BFGS optimizer reached the best-known probabilities with a small number of iterations in fractions of a second.

\section{Implementation and benchmarks}\label{sec: benchmarks}

\begin{table*}[t]
\centering
\setlength{\tabcolsep}{6pt}
\renewcommand{\arraystretch}{1.2}
\setlength{\aboverulesep}{0.1ex}
\setlength{\belowrulesep}{0.6ex}
\begin{tabular}{ccccccc}
\toprule
\addlinespace
\textbf{State prep.} & \textbf{Modes} & \textbf{Photons}  & \textbf{Optimizer} & \shortstack{\textbf{Mean time}\\\textbf{(conv.) [s]}}  &  \shortstack{\textbf{Mean iters}\\\textbf{(conv.)}} &\shortstack{ \textbf{Optimality}\\\textbf{rate}}   \\
\addlinespace
\midrule
Bell & 6 & 4 & Riemannian BFGS & 0.012 & 110 & 75\% \\
Bell & 6 & 4 & BFGS Clements & 0.75 & 310 & 65\% \\
Bell & 6 & 4 & BFGS algebra & 0.30 & 280 & 73\%  \\
Bell & 6 & 4 & Piquasso & 0.31 & 400 & 63\%  \\
Bell & 6 & 4 & MrMustard & 70  & 2900 &  40\% \\
\addlinespace
NOON & 4 & 6 & Riemannian BFGS & 0.006 & 56  &  65\% \\
NOON & 4 & 6  & BFGS Clements & 0.093 & 99 & 38\% \\
NOON  & 4 & 6  & BFGS algebra & 0.068  & 110 & 56\% \\
NOON  & 4 & 6  & Piquasso & 0.12 & 210 & 42\%  \\
NOON & 4 & 6 & MrMustard & 2.5 & 370 & 4\% \\
\addlinespace
GHZ  & 10 & 6  & Riemannian BFGS & 0.24 & 510 & 12\% \\
GHZ & 10 & 6 & BFGS Clements & 30 & 2100 & 0.6\% \\
GHZ & 10 & 6 & BFGS algebra & 5.0 & 1600 & 11\%  \\
GHZ  & 10 & 6 & Piquasso & 1200 & 370 & 0.2\% \\
GHZ  & 10 & 6  & MrMustard  & --- & --- & --- \\

\bottomrule
\end{tabular}
\caption{Benchmarking of heralded state preparations using several optimizers. We optimized the cost function \eqref{eq: cost gubarev}, with $\alpha = 10^{-3}$ and $\beta = 4$, considering the optimization successful when an infidelity of $10^{-7}$ was reached and the best-known success probabilities \cite{forbes_heralded_2025} were found: $7.4\%$ or $7.78\%$ for the Bell preparation \cite{gubarev_fock_2021, fldzhyan_compact_2021}; $5.64\%$ for the NOON \cite{vanmeter_general_2007}, and $0.925\%$ for the GHZ \cite{gubarev_improved_2020} (it doubles to $1.852\%$ when considering an additional heralding pattern). We analyzed three performance metrics: \textit{optimality rate} (percentage of Haar-random initial unitaries that converge to an optimal solution), wall-clock time and mean iterations to converge to an optimal solution (for those runs that reach the optimum). We left the results for the GHZ optimization with MrMustard empty because the computer ran out of RAM. The benchmarks were performed using QOptCraft v2.8.1, the BFGS in SciPy v1.17.1 \cite{Scipy}, Piquasso v7.2.2 \cite{kolarovszki_piquasso_2025} and MrMustard v1.0.0a1 \cite{yao_riemannian_2024} on a workstation equipped with an Apple M4 Max and 128 GB DDR5 RAM.
}
\label{tab: optimizer benchmarks}
\end{table*}

In this section, we review some of the key points in our implementation in QOptCraft of Riemannian optimization applied to heralded state preparation and benchmark it against other libraries and optimization methods. 

\subsection{QOptCraft implementation}\label{sec: qoptcraft implementation}

The main bottleneck in computing cost functions like the fidelity, Eq. \eqref{eq: fidelity}, lies in computing $\varphi(S)$. However, in most state preparation problems, the input, target and herald states are simple enough to be written as a superposition of a few number states $|\vec{n}\rangle$. In this case, to compute the fidelity, it is enough to compute $\varphi(S)|\vec{n}\rangle$ for every number state $|\vec{n}\rangle$ in $\inket$. Moreover, since we only need the amplitudes of output states with a certain heralding pattern, we can use Algorithm 2 (\texttt{SLOS\_gen}) of \cite{heurtel_strong_2023} to avoid computing the full output state. This differs from libraries like Piquasso \cite{kolarovszki_piquasso_2025}, whose simulator computes all the output amplitudes, resulting in a slower optimization. 

Computing the gradient of the fidelity (recall Eq. \eqref{eq: grad g generic}) requires computing
\begin{equation}\label{eq: fast grad fid}
    \varphi(S)^\dag \nabla_U f = \Big[\varphi(S)^\dag \thket \thbra \varphi(S) , \inket \inbra \Big]
\end{equation}
where $\thket \equiv \ket{t} \otimes \hket$. Thus, we also need to compute $\varphi(S)^\dag|\vec{n}\rangle$ for every number state in $\thket$. With this strategy, we compute the elements of $\varphi(S)$ that we actually need, speeding up the optimization. Note that, to compute the gradient of the success probability in Eq. \eqref{eq: heralded prob}, we also need to compute $\varphi(S)^\dag|\vec{n}\rangle$ for every number state with heralding pattern $\hket$.

Another expensive step in the optimization is storing the gradient $\nabla_U f$, of size $D\times D$. However, we can exploit the projector structure of gradients like Eq. \eqref{eq: fast grad fid} to directly compute $\dphi^\ast\left(\varphi(S)^\dag \nabla_U f \right)$, of much smaller size $m\times m$. The only large matrices that appear are the $\dphi(G_i)$ in the adjoint map \eqref{eq: adjoint algebra homomorphism}, but they are very sparse matrices that can be stored efficiently. 

These optimizations, which we illustrated for the heralded state preparation, carry over analogously to the heralded gate preparation from Section \ref{sec: gate preparation} and other linear optical problems.

\subsection{Benchmarks}

In this section, we compare our Riemannian optimizer against different optimization approaches and implementations found in the literature. In particular, we benchmark the following optimizers:
\begin{enumerate}
    \item Our Riemannian BFGS optimizer using analytical gradients, implemented in QOptCraft \cite{qoptcraft_repo}.
    \item A custom Euclidean BFGS optimizer using numerical gradients. Here we compare two parametrizations of the scattering matrix: one given by the Clements decomposition \cite{clements_optimal_2016} (a similar method to \cite{fldzhyan_compact_2021}) and another given by the parametrization of the algebra plus the Cayley transform (similar method to \cite{gubarev_improved_2020}).
    \item The Riemannian ADAM optimizer implemented in MrMustard \cite{yao_riemannian_2024}.
    \item A Euclidean BFGS optimizer using the analytical gradients computed with Piquasso's simulator \cite{kolarovszki_piquasso_2025} (which uses automatic differentiation).
\end{enumerate}

In Table \ref{tab: optimizer benchmarks}, we show the benchmarks for the optimization of three typical heralded state preparations: dual-rail Bell \cite{carolan_universal_2015, gubarev_improved_2020, fldzhyan_compact_2021}, NOON \cite{vanmeter_general_2007} and GHZ \cite{gubarev_improved_2020}. As we can see, our Riemannian BFGS is several orders of magnitude faster than the other optimizers. It also requires fewer iterations (less than half), which matters because every iteration requires computing many permanents. Lastly, we show the percentage of Haar-random initial unitaries that converge to an optimal solution (\textit{optimality rate}), which provides a rough idea of the number of local minima in the optimization landscape. Interestingly, the optimality rate for the NOON is much higher with Riemannian optimization and it is also among the highest for the Bell and GHZ.

As we can see, the Euclidean optimization of the parameters in the Clements decomposition \cite{clements_optimal_2016} with numerical gradients (the approach followed in \cite{fldzhyan_compact_2021}) is relatively fast and accurate for optimizations with a few modes and photons. However, it becomes slower for the GHZ optimization and the percentage of random seeds that converge drops sharply, possibly due to numerical errors caused by finite differences. In contrast, the Euclidean optimization of the algebra parametrization (the approach followed in \cite{gubarev_improved_2020, hartnett_automated_2026}) remains within a few seconds and maintains a high optimality rate even for the GHZ, although still lower than Riemannian BFGS.

The benchmarks also show the Euclidean optimization of the Clements decomposition modeled with Piquasso \cite{kolarovszki_piquasso_2025} and SciPy's BFGS implementation \cite{Scipy}. The advantage is that Piquasso has automatic differentiation capabilities, so we can feed an analytical gradient to the BFGS optimizer. This method proved to be relatively fast for problems with few modes and photons. However, it becomes very slow and has a low optimality rate for larger problems. One reason could be that Piquasso's simulator computes all the amplitudes of the evolved state before heralding, resulting in many unnecessary computations that quickly raise the computational complexity. In contrast, our implementation in QOptCraft uses the strong simulation (\texttt{slos\_gen}) algorithm from \cite{heurtel_strong_2023} to compute only the amplitudes with the desired heralding pattern, greatly reducing the number of Fock state evolutions that have to be computed.

Finally, we see that MrMustard \cite{yao_riemannian_2024}, even though it also uses Riemannian optimization, is the slowest of the methods and has a much lower optimality rate. One explanation could be that MrMustard uses a first-order optimization method (AdaBelief by default), while QOptCraft uses a second-order Riemannian BFGS (in Appendix \ref{appendix: optimizer comparison} we show the same trend for a different cost function). Moreover, MrMustard is designed with a focus on Gaussian state simulations, rather than Fock state simulations, and uses a cutoff for the number of photons that led the computer to run out of RAM (about 128 GB) during the GHZ heralded preparation.

In conclusion, QOptCraft's Riemannian BFGS optimizer (which uses the analytical gradients derived in Section \ref{sec: gradient example} and the efficient implementation from Section \ref{sec: qoptcraft implementation}) is orders of magnitude faster than other state-of-the-art optimizers found in the literature, while keeping a nice optimization landscape (i.e. a high rate of random seeds converging to an optimum). This shows the advantage of Riemannian optimization in larger-scale problems.

\section{Discussion}\label{sec: discussion}

In this work, we implemented a new framework for the Riemannian optimization of passive linear optical problems. Our optimizer uses a closed-form analytical gradient of the cost function and optimizes the unitary matrix of the photonic circuit without relying on any parametrization (e.g., the Clements decomposition \cite{clements_optimal_2016}). As shown in Section \ref{sec: benchmarks}, removing the Clements parametrization improves the optimization landscape, as can be seen from the higher rate of optimal results and the lower number of iterations required to reach a minimum. Our benchmarks also show a speedup of several orders of magnitude, which opens the door for optimizing over larger Hilbert spaces (of states with more modes and photons).

One disadvantage of not parametrizing the unitary is that it is not well suited for directly reducing the number of gates in the circuit, which would require optimizing for gate parameters equal to zero \cite{fldzhyan_compact_2021}. However, once an optimal unitary is found, a second-stage optimization and compilation could be run to reduce the optical depth and the number of optical components \cite{hartnett_automated_2026}, obtaining an equivalent interferometer with fewer experimental losses.

Our approach also differs from the Riemannian optimizer implemented in MrMustard \cite{yao_riemannian_2024} in three aspects: (1) we optimize passive circuits instead of Gaussian circuits; (2) we give exact formulas for the gradients rather than relying on automatic differentiation; (3) in addition to first-order methods, like Riemannian gradient descent or ADAM, we implemented second-order quasi-Newton methods, like Riemannian BFGS and L-BFGS, which give faster convergence.

We did not compare our optimizer with Hartnett et al. \cite{hartnett_automated_2026} since the code is not available, but there are several differences with our method. First, we compute the evolution of multiphoton states using the strong simulation method from Heurtel et al. \cite{heurtel_strong_2023}, while Hartnett et al. compute them using a Fast Fourier Transform-based simulation method. FFT-based simulation has a worse computational complexity than strong simulation, but it provides two advantages: (1) it is compatible with automatic differentiation (which is not a problem in our method since we use a closed analytical formula for the gradient); (2) FFT-based simulation can be GPU-accelerated, which enables them to search for graph states with up to 5 qubits. This contrasts with our implementation of Riemannian optimization, which runs exclusively on CPU.

There are many applications of linear optical optimization in the literature that could benefit from Riemannian optimization. Many open questions remain in heralded state preparation \cite{forbes_heralded_2025}, such as finding optimal preparations for graph states \cite{hartnett_automated_2026}, which are required for measurement-based linear optical quantum computation \cite{kok_linear_2007}. However, optimization in linear optics goes far beyond state preparation. Examples of other recent applications include verifying theorems \cite{aralov_photon_2026}, testing conjectures \cite{francalanci_local_2026} and studying entanglement \cite{lachman_certification_2026}. We leave it for future work to test Riemannian optimization in these use cases.

Finally, it is worth noting that optimizing over the unitary group is a very complicated problem. The group is not convex and the cost functions are full of local minima, so there is no hope of turning the optimization into an easy problem. However, any improvement in the cost function landscape and the optimization speed is essential for pushing the current boundaries of passive linear optical optimization.
\vspace{2.2cm}

\section{Data availability}

The code developed for this paper is available in the open-source library QOptCraft \cite{aguado_qoptcraft_2023, qoptcraft_repo} and the examples and benchmarks can be found in the accompanying repository \cite{optimization_repo}.

\section{Acknowledgements}

The author would like to thank J. C. García-Escartín, V. Gimeno i Garcia, J. J. Moyano-Fernández, U. Chabaud, C. Lopetegui, M. Frigerio and S. Fldzhyan for stimulating discussions and useful comments. P.~V.~P. has been funded under the UVa 2024 predoctoral contract, co-financed by Banco Santander; the Q-CAYLE project, funded by the European Union-Next Generation UE/MICIU/Plan de Recuperación, Transformación y Resiliencia/Junta de Castilla y León (PRTRC17.11); and the Department of Education, Junta de Castilla y León and FEDER Funds (Reference: CLU-2023-1-05).

\clearpage

\appendix

\section{Other optimization algorithms}\label{appendix: algorithms}

Riemannian gradient descent only guarantees linear convergence, which can be slow for large problems. This has motivated the proposal of Riemannian optimization algorithms that exploit second-order derivatives or add momentum to increase the rate of convergence.

\subsection{Riemannian BFGS}

Newton's optimization method improves gradient descent \eqref{eq: eucl grad descent} by replacing the gradient, $\nabla f(x_k)$, with a new descent direction, $\eta_k =  \nabla^{-2} f(x_k) \nabla f(x_k)$, where $\nabla^{-2} f$ is the inverse of the Hessian of $f$. In this way, by using second-order derivatives, Newton's method achieves quadratic convergence instead of linear convergence.

However, computing and storing the inverse Hessian can be expensive. This has motivated the introduction of quasi-Newton methods, which sacrifice the fast convergence rate in exchange for a more efficient algorithm. The key idea behind them is to compute approximations of the inverse Hessian, $B_k$, with finite differences using the so-called \textit{secant condition}:
\begin{equation}\label{eq: secant condition} x_{k+1} - x_k = B_{k+1} \left(\nabla f(x_{k+1}) - \nabla f(x_k) \right)\:.  
\end{equation}
Two well-known quasi-Newton methods are BFGS (Broyden–Fletcher–Goldfarb–Shanno) and L-BFGS, its limited-memory version, both of which achieve superlinear convergence \cite{Dennis1996}.

The same computational efficiency argument can be applied to the Riemannian Newton's method, so several versions of Riemannian BFGS \cite{Gabay1982, huang_bfgs, rw12, hga15} and L-BFGS \cite{hga15} have been proposed. However, the secant condition \eqref{eq: secant condition} is not well defined on a Riemannian manifold. Firstly, $x_{k+1}$ and $x_k$ are points on a manifold, so their difference is not defined. Secondly, the gradients $\nabla f(x_{k+1})$ and $\nabla f(x_k)$ live in different tangent spaces, so their difference is not defined either. To write a Riemannian secant condition, one first needs to map tangent vectors from $T_{x_k}\mathcal{M}$ to $T_{x_{k+1}}\mathcal{M}$ using a tool called a \textit{transporter}.

\begin{definition}[Definition 10.61, \cite{boumal_introduction_2023}]
Given a manifold $\mathcal{M}$, let $\mathcal{V}$ be open in $\mathcal{M} \times \mathcal{M}$ such that $(x, x) \in V$ for all $x \in \mathcal{M}$.
A transporter on $V$ is a smooth map
\begin{equation*}
    \begin{aligned}
        \mathcal{T} : \:\: V \: & \to \mathcal{L}(T\mathcal{M}, T\mathcal{M})\\
    (x, y) &\mapsto \mathcal{T}_{y \leftarrow x}
    \end{aligned}    
\end{equation*}
such that $\mathcal{T}_{y \leftarrow x}$ is linear from $\mathcal{T}_x\mathcal{M}$ to $\mathcal{T}_y\mathcal{M}$ and $\mathcal{T}_{x \leftarrow x}$ is the identity.
\end{definition}

A simple example of an (isometric) transporter on the unitary group is
\begin{equation}\label{eq: trivial transporter}
    \mathcal{T}_{U_{k+1} \leftarrow U_k}(U_k X) = U_{k+1} X  \:.
\end{equation}
An important technical property of this transporter is that $\mathcal{T}_{R_{U}(X) \leftarrow U}(U X) = \frac{d}{dt} R_U(X+tX)|_{t=0}=U\exp(X)X$ when $R_U(X)$ is the exponential retraction \eqref{eq: exp retraction}. This so-called \textit{locking condition} allows us to apply the Riemannian BFGS algorithm in \cite{hga15} to the unitary group and explains why we only use the exponential retraction with BFGS. 

Returning to the secant condition \eqref{eq: secant condition}, the transporter is used to substitute $x_{k+1} - x_k$ by
\begin{equation}
    s_k = \mathcal{T}_{x_{k+1} \leftarrow x_k}(\alpha_k\eta_k)  \:,
\end{equation}
where $\eta_k$ is the descent direction and $\alpha_k$ is the step size. Analogously, 
\begin{equation}
    y_k = \nabla f(x_{k+1}) - \mathcal{T}_{x_{k+1} \leftarrow x_k}(\nabla f(x_k))
\end{equation}
is the manifold version of $\nabla f(x_{k+1}) - \nabla f(x_k)$. Thus, the Riemannian analogue of the secant condition \eqref{eq: secant condition} is $s_k = B_{k+1} y_k$.

In the literature, several formulas for the inverse Hessian satisfying the secant condition have been proposed. Our implementation follows the BFGS update from \cite{hga15}:
\begin{equation} \label{eq: hessian bfgs update}
    \begin{aligned}
    B_{k+1} v = \tilde{B}_k v &+ \frac{\langle s_k, v \rangle}{\langle y_k, s_k \rangle} s_k - \frac{\langle s_k, v \rangle}{\langle y_k, s_k \rangle} \tilde{B}_k y_k \\
    &- \frac{\langle y_k, \tilde{B}_k v \rangle}{\langle y_k, s_k \rangle} s_k + \frac{\langle y_k, \tilde{B}_k y_k \rangle \langle s_k, v \rangle}{\langle y_k, s_k \rangle^2} s_k \:.
    \end{aligned}
\end{equation}
The transformation $\tilde{B}_k = \mathcal{T}_{x_{k+1} \leftarrow x_k} \circ B_k \circ \mathcal{T}_{x_k \leftarrow x_{k+1}}$ is necessary because $B_k$ acts on vectors in $T_{x_k}\mathcal{M}$ while $y_k$ and $s_k$ are vectors in $T_{x_{k+1}}\mathcal{M}$. In practice, the matrix of $B_{k+1}$ is computed from the coordinates of $s_k$ and $y_k$ in the tangent space, which are easy to compute for our choice of transporter \eqref{eq: trivial transporter}.

Last but not least, BFGS (Algorithm \ref{alg: bfgs}) needs a line search that satisfies the weak Wolfe conditions (Algorithm \ref{alg: wolfe line search}), which consist of the Armijo-Goldstein condition and a curvature condition. Note that, to check the curvature condition, we need to compute $\frac{d}{dt}f(\mathrm{R}_x(t\eta))|_{t=\alpha}$, which is easy when $R$ is the exponential retraction.

\begin{algorithm}[H]
\caption{Riemannian BFGS \cite{hga15}}\label{alg: bfgs}
\begin{algorithmic}[1]
\Require Initial inverse Hessian, $B_0 = \mathrm{Id}$.
\State \textbf{Input:} $x_0 \in \mathcal{M}$
\For{$k = 0, 1, 2, \ldots$}
    \State $\eta_k = -B_k \nabla f(x_k)$
    \State Pick $\alpha_k > 0$ with the Wolfe line-search (Algorithm \ref{alg: wolfe line search})
    \State $x_{k+1} = \mathrm{R}_{x_k}(\alpha_k\eta_k)$
    \State $s_k = \mathcal{T}_{x_{k+1} \leftarrow x_k}(\alpha_k\eta_k)$
    \State $y_k = \nabla f(x_{k+1}) - \mathcal{T}_{x_{k+1} \leftarrow x_k}(\nabla f(x_k))$\vspace{0.5mm}
    \State Compute $B_{k+1}$ with Eq. \eqref{eq: hessian bfgs update}
\EndFor
\State \textbf{Output:} Local minimum of $f$
\end{algorithmic}
\end{algorithm}

\begin{algorithm}[H]
\caption{Wolfe line-search \cite{nocedal_numerical, hga15}}\label{alg: wolfe line search}
\begin{algorithmic}[1]
\Require Constants $c_1 \in (0,\frac{1}{2})$, $c_2 \in (\frac{1}{2},1)$ (for example, $c_1 = 10^{-4}$ and $c_2 = 0.999$), $\tau_1 \in (0,1)$, $\tau_2 > 1$.
\State \textbf{Input:} $x \in \mathcal{M}$, search direction $\eta \in T_x\mathcal{M}$, $\bar{\alpha} > 0$.
\State Set $\alpha \leftarrow \bar{\alpha}$
\While{Wolfe conditions not satisfied}
    \If{$f(\mathrm{R}_x(\alpha \eta)) > f(x) + c_1 \alpha g(\operatorname{grad} f(x), \eta)$}\vspace{0.5mm}
        \State Decrease $\alpha$: $\alpha \leftarrow \tau_1 \alpha$ \vspace{0.5mm}
    \ElsIf{$\frac{d}{dt}f(\mathrm{R}_x(t\eta))|_{t=\alpha} < c_2 \frac{d}{dt}f(\mathrm{R}_x(t\eta))|_{t=0}$}\vspace{0.5mm}
        \State Increase $\alpha$: $\alpha \leftarrow \tau_2 \alpha$ \vspace{0.5mm}
    \Else
        \State \textbf{break} \Comment{Wolfe conditions are satisfied}
    \EndIf
\EndWhile
\State \textbf{Output:} $\alpha$
\end{algorithmic}
\end{algorithm}

In addition to Riemannian BFGS, we implemented its limited-memory version, Riemannian L-BFGS, which does not explicitly compute $B_k$. Instead, it stores $\ell$ vectors $s_k$ and $y_k$ from previous iterations and uses them to implicitly approximate $B_k$ and compute the new descent direction. Since the algorithm is more involved, we refer the reader to Algorithm 2 of \cite{hga15}.

\subsection{Gradient descent with momentum}

The slow convergence of gradient descent motivated the introduction of Polyak’s heavy-ball method \cite{POLYAK1964}, in which the gradient accumulates ``momentum'' from previous iterations, like a ball rolling down a hill, allowing the optimizer to overshoot some local minima in the hope of finding a lower one. The method was later refined by Nesterov’s accelerated gradient method \cite{nesterov_introductory_2004} and, nowadays, momentum algorithms like ADAM \cite{kingma2015adam} have become very popular for training neural networks.

\begin{algorithm}[H]
\caption{Momentum optimizer \cite{kong_quantitative_2024}}\label{alg: momentum}
\begin{algorithmic}[1]
\Require Lie group $\mathcal{G}$, step size $h > 0$, friction $\gamma > 0$
\State \textbf{Input:} $x_0 \in \mathcal{G}$,\ $\eta_0 = 0$,\ $\nabla f_0 = 0$
\For{$k = 0, 1, 2\ldots$}
    \If{\textit{Heavy-Ball}}
        \State $\eta_{k+1} = (1 - \gamma h)\,\eta_k - h\,x_k^{-1}\nabla f(x_k)$
    \EndIf
    \If{\textit{Nesterov accelerated gradient}}
        \State $\alpha \leftarrow 1 - \gamma h$
        \State $\eta_{k+1} = \alpha\,\eta_k 
            - \alpha\, h \bigl(x_k^{-1}\nabla f(x_k) - x_{k-1}^{-1}\nabla f(x_{k-1})\bigr)$
        \Statex    \hspace{2cm} $- h\,x_k^{-1}\nabla f(x_k)$
    \EndIf
    \State $x_{k+1} = \mathrm{R}_{x_k}(h\,\eta_{k+1})$
\EndFor
\State \textbf{Output:} Local minimum of $f$
\end{algorithmic}
\end{algorithm}

With the increasing popularity of Riemannian optimization, momentum methods have also been introduced \cite{riemannian_adam}, in particular in the context of optimizing over Lie groups \cite{tao_variational_2020, kong_quantitative_2024, campos_momentum_2024}. In our library, QOptCraft \cite{aguado_qoptcraft_2023}, we implemented the \textit{heavy-ball} and the Nesterov accelerated gradient methods following \cite{kong_quantitative_2024} (Algorithm \ref{alg: momentum}), and a Riemannian ADAM following \cite{riemannian_adam} (Algorithm \ref{alg: radam}).

\begin{algorithm}[H]
\caption{Riemannian ADAM \cite{riemannian_adam}}
\label{alg: radam}
\begin{algorithmic}[1]
\Require Manifold $\mathcal{M}$, step sizes ${\alpha_k}$, exponential decay rates for the moment estimates $\beta_1$, $\beta_2$ in the interval $[0,1)$, machine epsilon $\epsilon$.

\State \textbf{Input:} $x_0 \in \mathcal{M}$, $\tau_0 = 0$, $v_0 = 0$.
\For{$k = 0, 1, 2, \ldots$}
    \State $g_k \leftarrow \mathrm{grad}\,f(x_k)$
    \State $m_{k+1} \leftarrow 
        \beta_1\,\tau_k
        + (1 - \beta_1)\,g_k$
    \State $\hat m_{k+1} \leftarrow m_{k+1} / (1 - \beta_1^{k+1})$
    \State $v_{k+1} \leftarrow
        \beta_2\,v_k
        + (1 - \beta_2)\,\|g_k\|_{x_k}^{2}$
    \State $\hat v_{k+1} \leftarrow v_{k+1} / (1 - \beta_2^{k+1})$
    \State $x_{k+1} \leftarrow \mathrm{R}_{x_k}(
                -\,\alpha\, \hat m_{k+1} / (\sqrt{\hat v_{k+1}} +\epsilon))$
    \State $\tau_{k+1} \leftarrow
    \mathcal{T}_{x_{k+1} \leftarrow x_k}(m_{k+1})$
\EndFor
\State \textbf{Output:} Local minimum of $f$
\end{algorithmic}
\end{algorithm}

\subsection{Optimizer comparison}\label{appendix: optimizer comparison}

In Table \ref{tab: qoptcraft optimizer comparison}, we compare several Riemannian optimization algorithms.

\begin{table}[H]
\centering
\setlength{\tabcolsep}{6pt}
\renewcommand{\arraystretch}{1.2}
\setlength{\aboverulesep}{0.1ex}
\setlength{\belowrulesep}{0.6ex}
\begin{tabular}{ccccc}
\toprule
\addlinespace
\textbf{Problem} & \textbf{Optimizer} & \shortstack{\textbf{Mean}\\\textbf{time [s]}}  & \shortstack{\textbf{Mean}\\\textbf{iters}} & \shortstack{\textbf{Optimality}\\\textbf{rate}} \\
\addlinespace
\midrule
Bell prep. & GD  & 0.13   & 930 & 9\% \\
Bell prep. & ADAM & 0.65 & 9900 & 13\%  \\
Bell prep. & Nesterov & 2.7 & 40000 & 16\%  \\
Bell prep. & BFGS & 0.013 & 150 & 78\%  \\
Bell prep. & L-BFGS & 0.034 & 290 & 78\% \\
\addlinespace
NOON prep. & GD  & 0.08  & 600 & 11\% \\
NOON prep. & ADAM & 1.3 & 22000 & 50\%  \\
NOON prep. & Nesterov & 1.7 & 28000 & 25\%  \\
NOON prep. & BFGS & 0.0046 & 61 & 63\%  \\
NOON prep. & L-BFGS & 0.0062 & 78 & 62\% \\
\addlinespace
GHZ prep. & GD  & --- &  --- & 0\%  \\
GHZ prep. & ADAM & --- &  --- & 0\% \\
GHZ prep. & Nesterov & --- & --- & 0\%   \\
GHZ prep. & BFGS & 0.085  & 360 & 1.8\% \\
GHZ prep. & L-BFGS & 0.17 & 890 & 1.5\%   \\
\addlinespace
\bottomrule
\end{tabular}
\caption{Benchmarking of heralded state preparations using several Riemannian optimizers implemented in QOptCraft with the cost function \eqref{eq: cost piecewise}. The metrics used are the same as in Table \ref{tab: optimizer benchmarks}. The exact hyperparameters used can be found in the code repository \cite{optimization_repo}. We do not report results for the first-order optimizers in the GHZ preparation because they did not converge after 50000 steps.
}
\label{tab: qoptcraft optimizer comparison}
\end{table}

We found that quasi-Newton methods outperform first-order methods in both speed and optimality rate (percentage of Haar-random initial unitaries that converge to a known optimal solution). In fact, first-order optimizers failed to converge after 50000 iterations for the GHZ preparation, making them unreliable for larger optimizations.

\section{Faster exact state preparation with invariants}\label{appendix: invariants}

A simpler state preparation problem is one where there is no herald and we want to find the unitary that exactly maps the input state into the target state. This amounts to asking whether the two states are in the same orbit under the action of linear optical unitaries.

We already know from no-go theorems \cite{parellada_no-go_2023, singh_rigorous_2026}, local controllability \cite{mamon_orbit_2025, francalanci_local_2026} and linear optical invariants \cite{migdal_multiphoton_2014, parellada_lie_2026, draux_invariants_2025, rodari_observation_2025, yang_unveiling_2026} that most photonic states are not in the same orbit (e.g., NOON states for $N>2$ are not in the same orbit as Fock states \cite{parellada_lie_2026}).

In this appendix, we show that the Lie algebraic invariants \cite{parellada_lie_2026} can be used to simplify the exact state preparation problem by optimizing block-diagonal unitaries in $\Um$ instead of arbitrary unitaries. The size of the blocks corresponds to the multiplicity of the invariants. There are two extreme cases. If all the invariants have different values, the blocks are of size 1 and the matrix to optimize is diagonal. If all the invariants have the same value, there is a single block and we gain no improvement. In general, though, it will be something in between. This method speeds up the optimization not only because some matrix computations simplify, but also because the cost function will have fewer local minima, allowing us to find the optimum with fewer trials.

\subsection{Lie algebraic invariants}

In this section, we summarize the Lie algebraic invariants \cite{parellada_lie_2026} that we use to improve the exact state preparation.

Recall from Section \ref{sec: lie groups optics} that the unitary of a passive linear interferometer is mapped to its quantum evolution unitary via a group homomorphism $\varphi$. The differential of this homomorphism, $\dphi$, relates the algebras of both groups (which can be regarded as the effective Hamiltonians of the unitaries \cite{garcia-escartin_multiple_2019}). These relationships can be summarized in a commutative diagram:
\begin{center}
\begin{tikzcd}
    \mathfrak{u}(m) \ni G \quad \arrow{r}{d \varphi}\arrow{d}[swap]{\exp} & \quad  \sum_{ij} {G_{ij}a_i^\dagger a_j} \:\in \mathfrak{u}(D) \arrow{d}{\exp} \\
    \mathrm{U}(m) \ni S=e^G \: \arrow{r}[swap]{ \varphi} &  \: U=e^{d\varphi(G)} \:\in \mathrm{U}(D)
\end{tikzcd}
\end{center}

It is convenient to work with an orthonormal basis of $\mathfrak{u}(m)$:
\begin{align}
    &G^x_{jk}=\frac{i}{\sqrt{2}}(\ket{j}\bra{k}+\ket{k}\bra{j}) &{\rm for}\quad 1 \leq j < k \leq m, \nonumber \\
    &G^y_{jk}=\frac{-1}{\sqrt{2}}(\ket{j}\bra{k}-\ket{k}\bra{j}) &{\rm for}\quad 1 \leq j < k \leq m, \label{eq: basis algebra} \\
    &G^z_j \:=\:i\ket{j}\bra{j}  &{\rm for} \quad j= 1,\ldots , m. \quad \nonumber
\end{align}
Applying $\dphi$ to $\{G_i\}$, we obtain a basis of $\imum$ that we shall call $\{O_i=\dphi(G_i)\}$ (this one is not orthonormal).

Any density matrix $\rho$ corresponding to a mixture of states with $n$ photons in $m$ modes is a Hermitian matrix. Therefore, $i\rho \in \uD$ \footnote{Note that in \cite{parellada_lie_2026} we considered $\uD$ to consist of Hermitian matrices instead of anti-Hermitian ones.} and it can be mapped into the linear optical subalgebra $\imum \subset \uD$ by
\begin{equation}
    \rho_T = \sum_j \tr{i\rho \,O_j} O_j \:.
\end{equation}
This map is equivariant, meaning that, if $\rho$ is evolved into $\sigma=\varphi(S)\rho \varphi(S)^\dagger$, then
\begin{equation}\label{eq: equivariance}
    \sigma_T = (\varphi(S)\rho \varphi(S)^\dagger)_T = \varphi(S)\rho_T \varphi(S)^\dagger \:.
\end{equation}
This equation gives a necessary condition for linear optical evolution. Applying $\dphiinv$, we obtain \cite[Eq. (23)]{parellada_lie_2026}: 
\begin{equation}\label{eq:scattering tangent condition}
    S b_\rho S^\dagger  = b_\sigma \:,
\end{equation}
where
\begin{equation}
    b_\rho = \sum_i \tr{i \rho \, O_j}G_j \:.
\end{equation}
Equation \eqref{eq:scattering tangent condition} gives a simple necessary condition on the scattering matrices that map $\rho$ into $\sigma$, which we can use to simplify the structure of $S$.

\subsection{Simplifying the scattering matrix}

Let $S_0$ be a unitary such that $S_0 b_\rho S_0^\dag = b_\sigma$. Then any $S$ satisfying Eq. \eqref{eq:scattering tangent condition} belongs to the coset of the stabilizer subgroup of $b_\rho$ under the adjoint action of $\Um$ on its algebra, $\um$, \cite{math_exchange}:
\begin{equation} 
\begin{aligned}
    &\left\{S \in \Um  \:\: | \:\: S b_\rho S^\dag = b_\sigma \right\}
    \\
    = &\left\{S_0 \tilde S \:\: | \:\: \tilde S b_\rho \tilde S^\dag = b_\rho \,,\:\:  \tilde S \in \Um  \right\} \:.
\end{aligned}
\end{equation}

We now have two tasks: finding one $S_0$ and finding all the $\tilde S$. To find $S_0$, recall from \cite{parellada_lie_2026} that the eigenvalues of $b_\rho$ are invariant under linear optical evolution, so $b_\rho$ and $b_\sigma$ have the same eigenvalues. Letting $D$ be the diagonal matrix of their eigenvalues in ascending order, we can write $b_\rho = V D V^\dagger$ and $b_\sigma = W D W^\dagger$ for some unitaries $V$ and $W$. We immediately find an $S_0$ such that $b_\sigma = S_0 b_\rho S_0^\dagger$:
\begin{equation}\label{eq:S_0}
    S_0 = W V^\dagger \:.
\end{equation}

All that is left is to find $\tilde S$. From the diagonalization of $b_\rho$ we have $\tilde S V D V^\dagger \tilde S^\dagger = V D V^\dagger$. Therefore, finding all $\tilde S$ is equivalent to finding all unitaries $S' = V^\dagger \tilde S V$ such that
\begin{equation}
    S' D S'^\dagger = D \:.
\end{equation}
Luckily, this simplifies the problem. Since $D$ is diagonal, $S'$ can only be a block-diagonal matrix where the size of each block corresponds to the multiplicity of each eigenvalue. If all eigenvalues are different, $S'$ must be diagonal. If all the eigenvalues are equal, $S'$ can be any unitary in $\Um$. In general, it will be something in between, but it can significantly narrow the search space for an exact state preparation.

\subsection{Simplifying the cost function}

In the previous section, we obtained the decomposition $S = S_0 \tilde S = W S' V^\dag $, where $S'$ is a block-diagonal unitary that we shall optimize. Using the homomorphism property of $\varphi$ \eqref{eq: homomorphism property}, we can decompose $\varphi(S)$:
\begin{equation}
    \varphi(S) = \varphi(W) \varphi(S') \varphi(V^\dagger) \:.
\end{equation}
Note that if $S'$ is block-diagonal, each block acts on a separate subset of modes, so it preserves the number of photons in each subset. Thus, $\varphi(S')$ is also block-diagonal (in a basis where the states sharing the same number of photons in each mode-block are grouped together).

This decomposition applied to the cost function \eqref{eq: cost S} results in a new cost function
\begin{equation}
    \tilde g(S') = \tr{\varphi(S') \tilde B \varphi(S')^\dagger \tilde C} \,,
\end{equation}
where $\tilde B = \varphi(V^\dagger) B  \varphi(V) $ and $\tilde C = \varphi(W)^\dagger C \varphi(W) $. Its gradient with respect to $S'$ is the same as the one we computed in Section \ref{sec: gradient example}, except that it is now a block-diagonal matrix.

\subsection{Examples}

Reducing the optimization from an arbitrary unitary to a block-diagonal unitary provides two main advantages: (1) a higher optimality rate (more random seeds converge to the global optimum) and (2) fewer iterations (and less time) to reach an optimum.

In Table \ref{tab: orbit preparation}, we can see these improvements when searching for a unitary that maps a Fock state into another state in its orbit. For example, the state $\ket{1120}$ has 4 modes, so the basic Riemannian optimization involves finding a $4\times 4$ unitary. In contrast, the Lie algebraic invariants simplify the optimization to a block-diagonal unitary with one block of size 2 (corresponding to $\ket{11}$) and two blocks of size 1 (corresponding to $\ket{2}$ and $\ket{0}$). As we can see, the simplification greatly increases the optimality rate of the optimization and reduces the number of iterations.
\vspace{2mm}

\begin{table}[H]
\centering
\setlength{\tabcolsep}{6pt}
\renewcommand{\arraystretch}{1.2}
\setlength{\aboverulesep}{0.1ex}
\setlength{\belowrulesep}{0.6ex}
\begin{tabular}{cccc}
\toprule
\addlinespace
\textbf{Orbit} & \textbf{Optimizer} & \shortstack{\textbf{Mean}\\\textbf{iters}} & \shortstack{\textbf{Optimality}\\\textbf{rate}}\\
\addlinespace
\midrule
$\ket{1120}$  &  BFGS  & 25  & 22\% \\
$\ket{1120}$  &  BFGS w/ invariants & 4 & 100\% \\
\addlinespace
$\ket{111200}$ &  BFGS  & 37 & 13\%\\
$\ket{111200}$  &  BFGS w/ invariants & 13  & 46\% \\
\addlinespace
\bottomrule
\end{tabular}
\caption{For a given state in the same orbit as the input state, we compare the plain optimization with the optimized version using Lie algebraic invariants. The optimization was considered successful when an infidelity of $10^{-7}$ was reached.}
\label{tab: orbit preparation}
\end{table}

\clearpage

\bibliography{references}

\clearpage

\end{document}